\newtheorem{theorem}{Theorem}
\newcommand{\aest}{{\bf A}}
\newcommand{\best}{{\bf B}}
\newcommand{\cest}{{\bf C}}
\newcommand{\nexptime}{\textsc{NExpTime}\xspace}
\newcommand{\manuallabel}[2]{\def\@currentlabel{#2}\label{#1}}
\begin{document}

\title{A note on the product homomorphism problem and CQ-definability%
\thanks{We are grateful to Ross Willard for discussions on the topic
  and 
  for comments on an earlier draft.}}

\author{Balder ten Cate and Victor Dalmau}

\maketitle
 
The \emph{product homomorphism problem} (PHP) takes as input a finite
collection of relational structures $\aest_1, \ldots, \aest_n$ and another
relational structure $\best$, all over the same schema, and asks whether
there is a homomorphism from the direct product $\aest_1\times\cdots
\times \aest_n$ to $\best$. This problem is clearly solvable in
non-deterministic exponential time. It follows from results in
\cite{Willard10} that the problem is \nexptime-complete. The proof,
based on a reduction from an exponential tiling problem, uses
structures of bounded domain size but with 
relations of unbounded arity.  In this note, we provide a
self-contained proof of \nexptime-hardness of PHP,
and we show that it holds already for directed
graphs, as well as for structures of
bounded arity with a bounded domain size (but without a bound on the
number of relations). More precisely, we obtain:

\begin{theorem}
  \manuallabel{thm:binary}{\thetheorem.1}
  \manuallabel{thm:Willard}{\thetheorem.2}
  \manuallabel{thm:digraphs}{\thetheorem.3}
The PHP is \nexptime-complete~\cite{Willard10}. The lower bound
holds already for
\begin{enumerate}
\item structures with binary relations and a bounded domain size;
\item structures with a single relation and a bounded domain size;
\item structures with a single binary relation
\end{enumerate}
\end{theorem}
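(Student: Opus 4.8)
The plan is to reduce from the \emph{exponential tiling problem}, a standard \nexptime-complete variant of bounded tiling: given a finite tile set $T$, compatibility relations $H,V\subseteq T\times T$, an initial tile $t_0$, and $m$ in unary, decide whether there is $f\colon\{0,\dots,2^m-1\}^2\to T$ respecting $H$ horizontally and $V$ vertically, with $f(0,0)=t_0$ and a designated cell carrying an ``accepting'' tile (for concreteness, say $f(2^m-1,2^m-1)$). The obstacle common to all three statements is that grid adjacency is \emph{not} a direct-product relation --- whether bit $i$ of $x{+}1$ differs from bit $i$ of $x$ depends on the lower bits of $x$ --- so one must \emph{simulate} adjacency using relations that are expressible on a product.

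For part (1) I would take $\aest_1,\dots,\aest_{2m}$, each with domain $\{0,1\}$, so the product has domain $\{0,1\}^{2m}$, read as a pair $(x,y)$ of $m$-bit numbers, i.e.\ a grid cell. For each $k\le m$ the binary relation $F^x_k$ = ``flip exactly bits $1,\dots,k$ of $x$, keep $y$'' is a product relation, and so is the unary relation $T^x_k$ = ``$x$ has exactly $k{-}1$ trailing ones''; the key point is that $T^x_k$ marks precisely the cells from which the $F^x_k$-edge is the true horizontal successor. Add the symmetric $F^y_\ell,T^y_\ell$, the product relation $\mathrm{CORNER}=\{(\bar 0,\bar 0)\}$, and a similar relation singling out $(\bar 1,\bar 1)$ for the acceptance check. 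Now let $\best$ have domain $T\times\{1,\dots,m{+}1\}^2$ (polynomial size), recording a tile together with the trailing-ones counts of $x$ and $y$. The relations of $\best$ are chosen so that: $T^x_k$ forces a homomorphism to report the $x$-count honestly; an $F^x_k$-edge imposes the $H$-constraint on the two tiles exactly when the $x$-count at the source is $k$ (i.e.\ on genuine successor edges) and is otherwise vacuous; symmetrically for the $y$-side and $V$; $\mathrm{CORNER}$ forces the tile $t_0$; and the $(\bar 1,\bar 1)$-relation forces an accepting tile. Then a valid tiling $f$ yields the homomorphism $(x,y)\mapsto(f(x,y),\mathrm{tr}(x){+}1,\mathrm{tr}(y){+}1)$, and conversely the first coordinate of any homomorphism is a valid tiling, since the $T$-relations pin the counts down and thereby force the $F$-edges to enforce adjacency at exactly the cells where the bit-flip is a genuine increment. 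This gives (1) with $O(m)$ relations of arity $\le 2$ over two-element domains (unary relations being replaceable by binary ones in the usual way).

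For part (2) only one relation is available. Since no bounded-domain structure has a product in which the successor map is itself a single relation, the relations above cannot simply be merged; instead I would invoke a mild reworking of Willard's reduction, which attains bounded domain with a single relation of \emph{unbounded} arity, the extra arity being used to package the adjacency checks directly rather than through the per-$k$ families $F^x_k,T^x_k$. I would present this for self-containedness but regard it as essentially known.

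The main obstacle is part (3): a single \emph{binary} relation with no bound on domain size, so neither device above applies, although large factors are now allowed. My plan is to take the bounded-domain, multi-relational instance of part (1) and pass it through a standard structure-to-digraph translation $\mathcal D(\cdot)$ (each tagged edge replaced by a rigid directed gadget whose shape records the relation symbol and the endpoint). Functoriality of $\mathcal D$ together with the product projections $\aest_1\times\cdots\times\aest_{2m}\to\aest_i$ yield homomorphisms $\mathcal D(\aest_1\times\cdots\times\aest_{2m})\to\mathcal D(\aest_i)$, hence a homomorphism from $\mathcal D$ of the product into $\mathcal D(\aest_1)\times\cdots\times\mathcal D(\aest_{2m})$; the hard direction is the converse. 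The difficulty is that the latter product has many ``mixed'' vertices --- tuples that are gadget vertices in some coordinates and plain domain vertices in others --- with no counterpart in $\mathcal D$ of the product, and one must design the gadgets and the target digraph so that every such vertex is pushed into an inert part of the target, creating no spurious homomorphism, while the ``pure'' part still faithfully transmits homomorphisms to $\best$. Making this compatibility between the digraph encoding and direct products work is the technical heart of (3); the rest (the tiling reduction, the counter simulation, the correctness arguments) is routine.
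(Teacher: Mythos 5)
Your high-level strategy for part (1) matches the paper's (reduction from exponential tiling, $2m$ two-element factors whose product encodes a grid cell, and a family of relations indexed by $k$ simulating the successor map), but your construction violates the statement being proved: your target $\best$ has domain $T\times\{1,\dots,m+1\}^2$, which grows with $m$, whereas part (1) requires \emph{all} structures, including $\best$, to have bounded domain. The trailing-ones counts need not be stored in the target at all: for fixed $k$, the restriction of the horizontal successor to sources with exactly $k-1$ trailing ones is \emph{itself} a product relation (coordinatewise $\{(1,0)\}$ on the $k-1$ low positions, $\{(0,1)\}$ on position $k$, identity elsewhere), so one can place one binary relation $H_k$ in the signature per $k$, interpret it in $\best$ --- whose domain is just the fixed tile set --- as the horizontal compatibility relation, and dispense with $F^x_k$, $T^x_k$ and the count components. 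That is essentially the paper's route ($H$ is split into $m$ decomposable subrelations indexed by the position of the least significant zero bit). Your part (2) is not a proof: you defer to ``a mild reworking of Willard's reduction.'' The paper instead derives (2) from (1) by a short generic step: add a fresh element $0$, replace $R_1,\dots,R_k$ by one relation of arity $\sum_i r_i$ whose tuples are the tuples of each $R_i$ padded with $0$'s in the other blocks (plus the all-zeroes tuple), guard the original domain with a unary $P$, and finally merge $P$ and $R$ into a single relation via their cartesian product; this adds one domain element and is exactly where the single-relation, bounded-domain case comes from.

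For part (3) you correctly identify the strategy (an incidence-style digraph encoding) and the central obstacle (the ``mixed'' vertices of $\Pi_i G(\aest_i)$ that have no counterpart in $G(\Pi_i\aest_i)$), but you stop precisely there, and that is where the substance of the theorem lies. The paper's solution has three ingredients you would need: a normalization so that every element is the first coordinate of some tuple, which makes genuine domain elements recognizable as exactly the vertices with outgoing paths of length $r$ (this is what makes the easy direction go through); the encoding of each $r$-tuple $t$ as a path $t^1\to\cdots\to t^r$ with edges $t[j]\to t^j$; and, crucially, a chain of $r-1$ \emph{sink nodes} added only to $G(\best)$ and reachable from every element of $\best$, which absorb the mixed vertices. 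Correctness then rests on a four-way case analysis on the type of a product vertex, including a genuinely nontrivial retraction of the ``type 4'' vertices onto vertices of the product of the original structures. Note also that the paper reduces from the single-relation case of part (2) rather than from the multi-relation instance of part (1) as you propose, which keeps the gadget uniform and avoids tagging edges with relation symbols. As written, the proposal establishes none of the three parts in the exact form claimed.
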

This completes the picture, since PHP is solvable in polynomial time when all three of the
above parameters (i.e., number of relations, arity, and domain size) are bounded, as follows from the fact that, in this case, there are only finitely
many different possible input structures up to isomorphism. 

 Theorem~\ref{thm:binary} is proved by an adaptations of the technique
used in \cite{Willard10}. Theorem~\ref{thm:Willard} is proved by a
reduction
from~\ref{thm:binary}. Theorem~\ref{thm:digraphs} is proved
by a reduction 
from 
Theorem~\ref{thm:Willard}.

We also present an application of the above result to the
CQ-definability problem (also known as the PP-definability problem).

\section{Proof of Theorem~\ref{thm:binary}}

\begin{proof}
    By reduction from the exponential tiling problem. We are assuming a fixed set of tile types with associated horizontal and vertical compatibility relations, and the input of the 
    tiling problem consists of an integer $m$ (specified in unary) together with a 
  sequence of (not necessarily distinct) tile types $t_1, \ldots, t_m$. The problem is to
  decide whether the $2^m$-by-$2^m$ grid has a valid tiling where $t_1, \ldots, t_m$
   is a prefix of the sequence of tiles on the first row, starting at
   the origin.
  It is known that there is a fixed finite set of tile types for which
  this problem is \nexptime-hard.

  For ease of exposition, we will also make use of unary
  relations. These can easily be replaced by binary ones.

   The idea of the reduction is very simple.     We will define $2m$ structures, $\aest_1, \ldots, \aest_{2m}$,
   each having domain $\{0,1\}$. In this way, each element of the product $\Pi_i \aest_i$ is a bitstring of length $2m$, which we will interpret as a pair of bistrings of length $m$, where the first bitstring is the binary encoding of a horizontal coordinate and the second bitstring is the binary encoding of a vertical coordinate. The structure $\best$ will have one element for each tile type, so that a map $h:\Pi_i \aest_i\to \best$ can be viewed as a way of assigning a tile type to each position on the $2^m$-by-$2^m$ grid. Furthermore, by endowing the structures involved with suitable relations, we will ensure that every homomorphism $h:\Pi_i \aest_i\to \best$ corresponds to a valid tiling, and vice versa.

    Let $H, V\subseteq (\{0,1\}^{2m})^2$ be the horizontal and vertical successor relations on coordinate
    pairs.
    In other words, $H = \{(\textbf{x}\textbf{y}, \textbf{x}'\textbf{y})\mid
    \textbf{x}'=\textbf{x}+1\}$ and $V = \{(\textbf{x}\textbf{y}, \textbf{x}\textbf{y}')\mid
    \textbf{y}'=\textbf{y}+1\}$.
 Let $P_0, \ldots, P_m$ be singleton sets denoting the coordinate pairs
    $(0,0), \ldots, (m-1, 0)$. In order to make our reduction work, we need to somehow make
sure that the relations $H, V, P_0, \ldots, P_m$ are ``available'' in the product structure $\Pi_i \aest_i$, by choosing the factor structures $\aest_1, \ldots, \aest_{2m}$ appropriately.

Let us say that an $n$-ary relation $R$ over the domain $\{0,1\}^{2m}$ is \emph{decomposable} if it can be represented as a product $R_1\times\cdots\times R_{2m}$ where each $R_i$ is an $n$-ary relation over $\{0,1\}$. Intuitively, this means that
if we include in each factor structure $\aest_i$ the relation $R_i$, then
the product structure $\Pi_i \aest_i$ contains the relation $R$. 
Each of the unary relations $P_0, \ldots, P_m$, being a singleton, is trivially decomposable.
Indeed, if we use the notation $\underline{k}[i]$ to denote the value of the $i$-th bit of the binary encoding of the number $k$, then $P_k = \{\underline{k}[1]\}\times\cdots\times \{\underline{k}[m]\}\times \{0\}^m$.
The binary relations $H$ is \emph{not} decomposable. However, it turns out to be a union of decomposable relations, which will suffice for our purposes. First, observe that whenever $(\textbf{x}\textbf{y},\textbf{x}'\textbf{y}')\in H$ then 
   $\textbf{x}' = \textbf{x}+1$, and therefore $\textbf{x}$ must have at least one bit that is set to zero. For each $k\leq m$, let $H_k$ be the subrelation of $H$ containing all $(\textbf{x}\textbf{y},\textbf{x}'\textbf{y}')\in H$ for which it is the case that 
the $k$-th bit of $\textbf{x}$ is the least significant bit that is 0.
By definition, we have that $H = \bigcup_k H_k$. Then 
    $H_k$ decomposes: $H_k = \textsf{id}^{k-1}\times\textsf{diff}^{m-k+1}\times\textsf{id}^m$,
    where \textsf{id} is the identity relation on $\{0,1\}$ and \textsf{diff} is the difference relation on $\{0,1\}$.
The exact same story holds for $V$
(where we have that $V_k = \textsf{id}^m \times \textsf{id}^{k-1}\times\textsf{diff}^{m-k+1}$).

We are now ready to defined the structures $\aest_1, \ldots, \aest_{2m}$ and $\best$.
The signature consists of the relations $H_1, \ldots, H_m, V_1, \ldots, V_m, P_0, \ldots P_m$. 
    For $m,\ell<k$, we define 
    \[P_k^{\aest_\ell} = \begin{cases} \{\underline{k}[\ell]\} & \text{if $\ell\leq m$} \\
                                                 \{0\} & \text{otherwise} 
                           \end{cases}
   \]
    \[H_k^{\aest_\ell} = \begin{cases} \textsf{diff} & \text{if $\ell\in [k,m]$} \\
                                                  \textsf{id} & \text{otherwise} 
                           \end{cases}
    \]
    \[V_k^{\aest_\ell} = \begin{cases}\textsf{diff} & \text{if $\ell\in [m+k, 2m]$} \\
                                                 \textsf{id} & \text{otherwise} 
                           \end{cases}
    \]

   The structure $\best$ is defined as follows:
   its domain is the set of all tile types. The unary predicate $P_i$ 
  denotes the singleton set $\{t_i\}$ as specified in the instance of the tiling problem.
   The relations $H_k$ and $V_k$ contain all pairs of tile types that are horizontally,
    respectively vertically, compatible.

    It is now straightforward to verify that there is a homomorphism $h:\Pi_i \aest_i \to \best$
   if and only if there is a valid tiling.
\end{proof}

\section{Proof of Theorem~\ref{thm:Willard}}

\begin{proof}
 The proof proceed by a reduction from Theorem~\ref{thm:binary},
 which states that PHP is in \nexptime even
  for structures of a bounded domain size.  The reduction goes in two steps.
  We first reduce to the case with two
  relations. Let $\best$ be any structure  with domain $D$ and with multiple relations
  $R_1,\dots,R_k$ of respective arity $r_1, \ldots, r_k$ over $D$.
  We denote by $\best^*$ the structure with domain $D\cup\{0\}$
  that has 
\begin{enumerate}
\item[(i)] a unary relation $P$ denoting the set $D$
\item[(ii)] a relation $R$ of arity
   $r_1+ \cdots + r_k$ consisting of 
   the all-zeroes tuple $(0, \cdots, 0)$, and, 
   for every $(a_1\ldots a_{r_i})\in R_i$ ($1\leq i\leq
  k$), the tuple whose first $r_1 + \cdots + r_{i-1}$ coordinates
   are all $0$, whose subsequent $r_i$ coordinates are $a_1\ldots
   a_{r_i}$,  and whose final $r_{i+1}+\cdots+r_k$ coordinates are $0$
   again.
\end{enumerate}
This transformation can be carried out in
  polynomial time, and it increases the domain of each structure with
  at most one element. Furthermore, we claim that
  $\Pi_i\aest_i^*\to \best^*$ if and
  only if $\Pi_i\aest_i\to \best$. 
  In one direction, suppose $h:\Pi_i\aest_i^*\to\best^*$. By
  construction (and, more specifically, due to the presence of the
  unary relation $P$), $h$ must map every element of $\Pi_i\aest_i$ to
  an element of $\best$. It is then easy to see that $h$ is in fact a
  homomorphism from $\Pi_i\aest_i$ to $\best$. 
  Conversely, suppose $h:\Pi_i\aest_i\to
  \best$. Let $h'$ be the map from $\Pi_i\aest_i^*$
  to $\best^*$ that extends $h$ such that every element of $\Pi_i\aest_i^*$
  containing a 0 is sent to the element 0 of $\best^*$. 
  Then $h'$ is a homomorphism from 
  $\Pi_i\aest_i^*$ to $\best^*$. This follows from 
  the fact that (i) no element containing a 0 can 
  belong to the $P$ relation in $\Pi_i\aest_i^*$, 
  and (ii) if a tuple in the relation $R$ of $\Pi_i\aest_i^*$
  includes an element containing a $0$, then this
  tuple consists entirely of elements that contain a 0, 
  and hence $h'$ maps the tuple in question to 
  the all-zeroes tuple, which belongs again to $R$ in $\best^*$.

  As a final step, we further reduce to the case with a single relation.
  This is done by replacing each structure with two relations, $P$ and
  $R$, by the structure with the same domain and with a single
  relation that is defined as the cartesian product of $P$ and $R$. 
  Again, this transformation can be carried out in polynomial time,
  it does not affect the domains of the structures involved, and
  it preserves the existence or non-existence of a homomorphism
  from $\Pi_i\aest^*_i$ to $\best^*$.  
\end{proof}

\section{Proof of Theorem~\ref{thm:digraphs}}

\begin{proof}
  We shall give a reduction from the PHP with a single
  relation (Theorem~\ref{thm:Willard}). Let $\aest_1,\dots,\aest_n,\best$ be
  an instance of the PHP, using a single $r$-ary relation $R$. We may
  assume without loss of generality that, for each structure
  $\cest=(C,R^\cest)$  among  $\aest_1,\dots,\aest_n,\best$, the projection of
  $R^\cest$ to the first coordinate is the entire domain $C$.
  This is because we can always replace the $r$-ary relation $R$ by
  the
  $r+1$-ary relation $C\times R$ where $\times$
  indicates here the cartesian product.
 This
  transformation can be carried out in polynomial time and it does not affect
  the existence or non-existence of a homomorphism from
  $\Pi_i\aest_i$ to $\best$.
Henceforth, we shall use $R$ to denote the unique relation, and $r$ to
denote its arity. 
If $t$ is a $r$-ary tuple and $j\in\{1,\dots,r\}$ we shall denote by $t[j]$ the $j$th component of $t$.

For every $i$, we define $G(\aest_i)$ to be the following digraph:

The nodes of $G(\aest_i)$ include all elements of $\aest_i$. Furthermore,
for every tuple $t=(a_1,\dots,a_r)\in R_i$, $G(\aest_i)$ contains $r$
additional nodes, which we
denote by $t^j$ with $j=1,\dots,r$.
 These nodes are connected by
the following directed edges:
\begin{itemize}

\item $(t^j,t^{j+1})$ for every $1\leq j<r$.

\item $(t[j],t^j)$ for every $1\leq j\leq r$.

\end{itemize}

We define $G(\best)$ as  the digraph obtained from $\best$ in the
same way, except that we further add 
$r-1$ additional elements $s^1,\dots,s^{r-1}$ called {\em sink nodes},
connected by edges $(s^j,s^{j+1})$ for every $1\leq j< r-1$, and 
an edge from every element of $\best$ to every sink node.

\begin{trivlist}
\item \textbf{Claim:} there is a homomorphism $h:\Pi_i G(\aest_i)\to G(\best)$ if and only if
  there is a homomorphism $h':\Pi_i\aest_i\to\best$.
\end{trivlist}

In the remainder, we prove this claim,
which immediately implies the theorem. 
We start with the more difficult direction: let $h$ be a homomorphism from $\Pi_i \aest_i$ to $\best$.  We shall define from $h$ a homomorphism $h'$ from $\Pi_i G(\aest_i)$ to $G(\best)$. Let $v=(v_1,\dots,v_n)$ be a node of  $\Pi_i G(\aest_i)$.

\begin{itemize}

\item If $v_i\in \aest_i$ for all $i$ then we say that $v$ is of ``type $1$''. In this case we define $h'(v)=h(v)$.

\item If, for all $i$, $v_i=t_i^{j_i}$ where $t_i$ is a tuple in (the relation of) $\aest_i$
and $j_i\in\{1,\dots,r\}$ then:

\begin{itemize}

\item If, in addition, there exists some $j$ such that $j_i=j$ for every $i$ then we say that $v$ is of ``type $2$''.  Note that $t_1\times\cdots\times t_m$ is a tuple in 
$\Pi_i \aest_i$ and hence $h(t_1\times\cdots\times t_m)$ (where $h$ is applied component-wise) is a tuple of $\best$. In this case, define $h'(v)$ to be $h(t_1\times\cdots\times t_m)^j$.

\item Otherwise we say that $v$ is of ``type $3$'' and we set $h(v')$
  to the sink node $s^j$ where $j=\min{j_i}$. Observe that, in this case, necessarily $j\leq r-1$.

\end{itemize}

\item If $v$ is not in any of the previous types then we say that is of ``type $4$''. In this case, we shall prove there exists a vertex $u$ of type $1$ such that for 
every vertex $w$ of type $2$ the following holds: 
$$(v,w) \text{ is an edge of } \Pi_i G(\aest_i) \Rightarrow (u,w) \text{ is an edge of } \Pi_i G(\aest_i)$$
In this case we set $h'(v)=h'(u)$.
Let us show that such $u$ exists. If there exists $i,i'$ such that $v_i=t_i^{j_i}$ and $v_{i'}=t_{i'}^{j_{i'}}$ and $j_i\neq j_{i'}$ then clearly $v$ does not have an outgoing edge to any vertex of type $2$ and we can set $u$ to be any arbitrary vertex of type $1$. Same applies if there exist $i$ such that $v_i=t_i^r$. Consequently we are left with the case in which there exists some $j\in\{1,\dots,r-1\}$ such that for every $i$, $v_i\in \aest_i$ or $v_i=t_i^{j}$ for some tuple $t_i$ in $\aest_i$. Define $u_i$ to be $v_i$ in the first case and $t_i[j+1]$ in the second and set $u=(u_1,\dots,u_m)$.

Let $w=(w_1,\dots,w_n)$ be a node of type $2$. We shall prove that for every $i$, if $(v_i,w_i)$ is an edge of $\Pi_i G(\aest_i)$ then so
if $(u_i,w_i)$. The claim is obvious whenever $u_i=v_i$. 
Assume now that $v_i=t_i^j$. Since $t_i^j$ has only one outgoing edge (to $t_i^{j+1}$) in $G(\aest_i)$ it follows that 
$w_i=t_i^{j+1}$. The claim follows from the fact that $u_i=t_i[j+1]$ and $G(\aest_i)$ contains edge $(t_i[j+1],t_i^{j+1})$. 
\end{itemize}

Let us prove that $h'$ is indeed a homomorphism. Let $(u,v)$ be an edge in $\Pi_i G(\aest_i)$ and let $u=(u_1,\dots,u_m)$ and $v=(v_1,\dots,v_m)$. We shall prove 
that $(h(u),h(v))$ belongs to $G(\best)$ by means of a case analysis on the types of $u$ and $v$. Notice that $v$ is necessarly of type $2$ or $3$ since nodes of type $1$ or $4$ do not have incoming edges.

\begin{itemize}

\item $u$ is of type $1$. If $v$ is of type $3$ the claim follows from the fact that $G(\best)$ has an edge from every element in $\best$ to every sink vertex. Assume now that $v$ is of type $2$, that is, of the form $(t_1^j,\dots,t_m^j)$.
Since $(u,v)$ is an edge of $\Pi_i G(\aest_i)$ and $u$ is of type $1$ it follows that
$u_i=t_i[j]$ for every $i$. Hence $u=(t_1\times\cdots\times t_m)[j]$ and, since $h$ defines a homomorphism, 
$h(u)$ is the $j$th component of $h(t_1\times\cdots\times t_m)$ ($h$ is applied component-wise). It follows that $G(\best)$ contains the edge from $h'(u)$ to
$h'(v)=h(t_1\times\cdots\times t_m)^j$.

\item $u$ is of type $2$. Then necessarily there exists $t_1,\dots,t_m$ and $j$ such that $u=(t_1^j,\dots,t_m^j)$ and $v=(t_1^{j+1},\dots,t_m^{j+1})$ and the claim follows directly from the definitions.

\item $u$ is of type $3$ then $v$ is necessarily of type $3$ as well. Furthermore, it follows that if $h'(u)$ is $s^j$ then necessarily $h'(v)=s^{j+1}$.

\item $u$ is of type $4$. It follows directly from the definition of $h'(u)$ and the fact that every vertex of type $3$ is mapped by $h'$ to a sink node.
\end{itemize}

Conversely, let $h'$ be a homomorphism from $\Pi_i G(\aest_i)$ to
$G(\best)$. Recall that each element of $\Pi_i \aest_i$ is in
particular an element of $\Pi_i G(\aest_i)$.  We claim that the
restriction of $h'$ to $\Pi_i \aest_i$ is a
homomorphism from $\Pi_i \aest_i$ to $\best$. 

First, we show 
that, for each element $t$ of $\Pi_i \aest_i$, $h'(t)$ is an
element of $\best$.
Let $t=(t_1, \ldots, t_n)$ be any element of $\Pi_i \aest_i$.  Recall
that we have assumed that the projection of $R^{\aest_i}$ on the first
coordinate is the entire domain of $\aest_i$. Hence, each $t_i$ is the
first component of some tuple in $R^{\aest_i}$. By construction of
$G(\aest_i)$, this implies that $t_i$ has an outgoing path of length
$r$ in $G(\aest_i)$, and hence, $t$ has an outgoing path of length $r$
in $\Pi_i \aest_i$.  It follows that $h'$ must map $t$ to a node of
$G(\best)$ that has an outgoing path of length $r$. By construction
of $G(\best)$, then, $h(t)$ must be an element of $\best$.

Next, we shall show that $h:\Pi_i\aest_i\to \best$ is a
homomorphism. Let
$(t^1, \ldots, t^r)\in R^{\Pi_i\aest_i}$, where each $t^j = (t^j[1],
\ldots, t^j[n])$. Then we have that
$(t^1[i],\ldots,t^r[i])$ belongs to $R^{\aest_i}$,  for each $i\leq n$.
Consequently, $(t^1[i],\ldots,t^r[i])$ satisfies the conjunctive
query $$q(x_1, \ldots, x_r)=\exists y_1\ldots y_r (\bigwedge_{1\leq
  i\leq r}
E(x_i,y_i)\land\bigwedge_{1\leq i<r} E(y_i,y_{i+1}))$$
It follows that $(t^1, \ldots, t^r)$ satisfies the same conjunctive
query in $\Pi_i G(\aest_i)$, and therefore, since conjunctive
queries are preserved by homomorphisms, 
$h(t^1, \ldots, t^r)$ satisfies $q$ in $G(\best)$. It follows by
construction of $G(\best)$ that $h(t^1, \ldots, t^r)\in R^\best$.
\end{proof}

\section{Application: CQ-definability}

The \emph{CQ-definability problem} (also known under the name
PP-definability, and several other names), is the problem with input an instance $I$ and a relation $S$
over the domain of $I$, to decide whether there is a conjuctive query
$q$ such that $q(I)=S$.  It has been long known that this problem is
decidable in co\nexptime (see discussion and references in
\cite{Willard10}). It was shown in \cite{Willard10} that the
CQ-definability problem is co\nexptime-complete, even for instances of
a bounded domain size. On the other hand, the proof used relations of
arbitrarily large arity. We show that  the same problem is
co\nexptime-complete for a fixed schema (but without a bound on the
size of the domains of the instances).

\begin{theorem}
 The CQ-definability problem is
  co\nexptime-hard already for unary queries over a fixed schema
  consisting of a single binary relation. 
\end{theorem}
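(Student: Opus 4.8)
The plan is to derive this from Theorem~\ref{thm:digraphs} using the well-known correspondence between CQ-definability and polymorphisms. Recall that a \emph{polymorphism} of a finite structure $I$ is a homomorphism from a finite direct power $I^k$ of $I$ to $I$, and that a relation $S$ over the domain $D$ of $I$ is CQ-definable (pp-definable) in $I$ precisely when it is preserved by all polymorphisms of $I$; moreover it is enough to consider polymorphisms of arity $|S|$, as witnessed by the canonical conjunctive query of the power $I^{|S|}$ with a suitably distinguished tuple. For a unary $S=\{a_1,\dots,a_N\}$ this specializes to: $S$ is CQ-definable in $I$ if and only if every homomorphism $h\colon I^N\to I$ satisfies $h(\bar e)\in S$, where $\bar e=(a_1,\dots,a_N)\in D^N$. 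Hence the complement of CQ-definability---the co\nexptime side---asks whether some polymorphism of $I$ moves $\bar e$ out of $S$, and it is to this problem that I would reduce the PHP.

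Given a PHP instance $\aest_1,\dots,\aest_n,\best$ over a single binary relation (Theorem~\ref{thm:digraphs}), I would build a digraph $I$ and a unary relation $S$ on it so that $I^{|S|}$ contains $\aest_1\times\cdots\times\aest_n$ as an identifiable substructure on which a polymorphism is forced to behave like a homomorphism into a copy of $\best$. After a preprocessing step, $I$ would be essentially the disjoint union of $\aest_1,\dots,\aest_n$ and $\best$ together with a small rigidifying gadget: in each $\aest_i$ one singles out a vertex $s_i$ and arranges---e.g.\ by adding a self-loop at $s_i$ and a single self-loop in $\best$, after ensuring these are the only self-loops and that the structures are connected---that $s_i$ is pinned down by a homomorphism-invariant feature; one also adds an isolated self-looped vertex $*$ to absorb irrelevant tuples. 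Setting $S=\{s_1,\dots,s_n,*\}$ makes polymorphisms of arity $|S|$ the relevant ones, with distinguished tuple $\bar e=(s_1,\dots,s_n,*)$, and the ``diagonal'' induced subdigraph of $I^{|S|}$---coordinate $i\le n$ over the copy of $\aest_i$, last coordinate fixed to $*$---is a copy of the product $\aest_1\times\cdots\times\aest_n$ (of the preprocessed factors), with $\bar e$ its unique self-looped vertex, and the copies of $\best$ and of each $\aest_i$, as well as $*$, are unions of connected components of $I$.

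The correctness argument then has two short directions. From a homomorphism $g\colon\aest_1\times\cdots\times\aest_n\to\best$ one defines a polymorphism $h\colon I^{|S|}\to I$ equal to $g$ on the diagonal (landing in the copy of $\best$) and constant $*$ on every other connected component of $I^{|S|}$; since there are no edges between these components and $*$ has a self-loop, $h$ is a homomorphism, and $h(\bar e)=g(\bar e)$ lies in the copy of $\best$, hence outside $S$, so $S$ is not CQ-definable. Conversely, if $h\colon I^{|S|}\to I$ is any polymorphism with $h(\bar e)\notin S$, then $h(\bar e)$ is self-looped because $\bar e$ is; the only self-looped vertices of $I$ are the $s_i$, the vertex $*$, and the self-looped vertices of the copy of $\best$; the first two lie in $S$, so $h(\bar e)$ lies in the copy of $\best$, and since the diagonal is connected it is mapped by $h$ entirely into the copy of $\best$, so its restriction to the diagonal is the required homomorphism $\aest_1\times\cdots\times\aest_n\to\best$.

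I expect the main obstacle to be the preprocessing implicit above: it has to simultaneously keep the PHP answer unchanged, make each factor (hence its power) connected with a unique self-loop, and not destroy genuine product homomorphisms---and these pull against each other, since the usual way to force connectivity (dominating vertices) kills homomorphisms into a target lacking such a vertex, while freely adding a self-loop to a loopless target can make the product-homomorphism question trivially true. Getting this balance right---for instance by a staged preprocessing, or by working directly with the loopless, ``leveled'' digraphs produced by the reduction proving Theorem~\ref{thm:digraphs}, for which connectivity is easy to arrange---is where the real effort goes; granting a preprocessing with these properties, the reduction and both directions of its correctness are routine.
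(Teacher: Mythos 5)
Your high-level strategy is the right one, and it matches the paper's in spirit: reduce from PHP for a single binary relation, take $I$ (the paper's $\cest$) to be a disjoint union of the factors and the target with one distinguished element per factor, let $S$ be the set of distinguished elements, and exploit the fact that $S$ is CQ-definable iff every homomorphism $I^{|S|}\to I$ maps the tuple enumerating $S$ back into $S$ (the paper phrases the ``not definable'' direction via preservation of the putative query under the induced map $\cest^n\to\cest$, and the ``definable'' direction via an explicit query, but this is the same Galois connection you invoke). The genuine gap is the step you defer to an unspecified ``preprocessing'': the mechanism you propose for pinning the distinguished tuple $\bar e$ in a homomorphism-invariant way, namely self-loops, cannot be made to work, and it is not a technicality but the crux of the reduction. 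Any self-loop placed in (the connected component of) $\best$ makes $\aest_1\times\cdots\times\aest_n\to\best$ trivially true, since everything can be mapped to the looped vertex; this destroys all ``no'' instances. Symmetrically, adding self-loops at the $s_i$ in the factors adds new edges to the product (every tuple of edges in which some coordinates use the loop $(s_i,s_i)$ and others use genuine edges), so a homomorphism of the original product need not extend to the preprocessed one; this endangers the ``yes'' instances. Both correctness directions of your argument route through the self-loop at $\bar e$ and at its image, so without a replacement for this device the proof does not close.

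The paper's resolution is to pin by out-path length instead of by loops: since the digraphs produced by the proof of Theorem~\ref{thm:digraphs} all have maximum directed path length exactly $r$, one adds a fresh apex $a_i$ with an edge to every element of $\aest_i$, and an apex $b$ with an edge to every element of $\best$. Then $a_1,\ldots,a_n,b$ are precisely the vertices with an outgoing path of length $r+1$ (a property preserved by homomorphisms out of powers, since such a path in $\cest^n$ projects to one in each coordinate), one takes $S=\{a_1,\ldots,a_n\}$, and the out-neighbourhood of $(a_1,\ldots,a_n)$ in $\cest^n$ induces exactly $\aest_1\times\cdots\times\aest_n$ while that of $b$ induces $\best$. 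This keeps all structures loopless, so the PHP answer is untouched, and it lets the ``definable'' direction be witnessed explicitly by the query stating that the canonical query of the product holds in the out-neighbourhood of $x$. Your instinct that the preprocessing is ``where the real effort goes'' is correct; but what is needed there is not effort within the self-loop scheme, it is this different pinning idea.
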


\begin{proof}
 Reduction from PHP with a single binary relation $R$
  (Theorem~\ref{thm:digraphs}).  Let instances $\aest_1,
 \ldots, \aest_n$
  and $\best$ be given. Inspection of the proof of
  Theorem~\ref{thm:digraphs} shows that we may assume that, in each
  of these stuctures, the maximum length of a directed path is precisely $r$,
  for some fixed natural number $r$.  Let $\cest$ be the instance consisting
  of the disjoint union of $\aest_1, \ldots, \aest_n$ and $\best$, extended with
  the facts $R(a_i,x)$ for all $i\leq n$ and $x\in \aest_i$, and $R(b,x)$
  for all $x\in \best$, where $a_1, \ldots, a_n$ and $b$ are fresh
  elements. Observe that each $a_i$, and also $b$, by construction, has an outgoing
  path of length $r+1$, while no other elements have an outgoing
  path of length $r+1$.
  Let $S=\{a_1, \ldots,
  a_n\}$.  Then we claim that $\aest_1 \times ... \times \aest_n \to \best$ if and
  only if $S$ is not definable inside $\cest$ by a conjunctive query. In one direction,
  if $\aest_1 \times\ldots\times \aest_n \to \best$ then clearly $S$ is not
  definable by a conjunctive query, because, by homomorphism preservation, the same
  conjunctive query would have to select $b$.  On the other hand, if 
  $\aest_1 \times ... \times \aest_n \not\to \best$, then we can
  construct a query $q$ defining $S$ as follows: first we define $q_1$
  to be the canonical Boolean  conjunctice query of $\aest_1 \times\ldots\times
  \aest_n$, and we define $q(x)$ to be the unary conjunctive query
  expressing that $q_1$ holds in the submodel of $\cest$ consisting of
  all elements reachable (in one step) from $x$. By
  construction,  $q(\aest)$ includes all of $S$ and excludes $b$. It
  is also easy to see that $q(\aest)$ contains no elements other than
  $a_1, \ldots, a_n$ and $b$. Therefore, $q$ defines $S$. 
\end{proof}

\bibliographystyle{plain}
\bibliography{prod}
\end{document}